\documentclass{article}[11pt,letter]

\usepackage[T1]{fontenc}
\usepackage[english]{babel}
\usepackage[cmex10]{amsmath}
\usepackage[utf8]{inputenc}
\usepackage{amssymb}
\usepackage{amsthm}
\usepackage{mathtools}
\usepackage{xspace}
\usepackage[symbol*]{footmisc}
\usepackage[hyperfootnotes=false,breaklinks,bookmarks=false]{hyperref}
\usepackage{algorithmic}
\usepackage{algorithm}
\usepackage{color}
\usepackage{authblk}
\usepackage{fullpage}
\usepackage{cite}
\usepackage{todonotes}

\newtheorem{definition}{Definition}[section]
\newtheorem{lemma}[definition]{Lemma}
\newtheorem{theorem}[definition]{Theorem}

\newtheorem{corollary}[definition]{Corollary}

\newcommand{\bigo}{\mathcal{O}}

\newcommand{\symb}{R}

\newcommand{\etal}{{et~al.}\xspace}

\newcommand{\offset}{\mathsf{offset}}
\newcommand{\name}{\mathsf{name}}
\newcommand{\rank}{\mathsf{rank}}
\newcommand{\select}{\mathsf{select}}

\newcommand{\cost}{\delta}
\newcommand{\da}{\delta'}
\newcommand{\length}{\ell}
\newcommand{\encoding}{\mathsf{label}}

\begin{document}

\author[1]{Pawe\l{} Gawrychowski}
\affil[1]{Institute of Informatics, University of Warsaw, Poland}
\author[2]{Adrian Kosowski}
\affil[2]{Inria Paris and IRIF, Universit\'e Paris Diderot, France}
\author[3]{Przemys\l{}aw Uznański%\footnote{Corresponding author. E-mail: przemyslaw.uznanski@inf.ethz.ch}
}
\affil[3]{Department of Computer Science, ETH Z\"{u}rich, Switzerland}

\date{}%\today}

%\title{\textbf{Sublinear-Space Distance Labeling using Hubs}}
\title{Sublinear-Space Distance Labeling using Hubs}

\maketitle

\begin{abstract}
A distance labeling scheme is an assignment of bit-labels to the vertices of an undirected, unweighted graph such that the distance between any pair of vertices can be decoded solely from their labels. We propose a series of new labeling schemes within the framework of so-called hub labeling (HL, also known as landmark labeling or 2-hop-cover labeling), in which each node $u$ stores its distance to all nodes from an appropriately chosen set of hubs $S(u) \subseteq V$. For a queried pair of nodes $(u,v)$, the length of a shortest $u\!-\!v$-path passing through a hub node from $S(u)\cap S(v)$ is then used as an upper bound on the distance between $u$ and $v$.

We present a hub labeling which allows us to decode exact distances in sparse graphs using labels of size sublinear in the number of nodes.  For graphs with at most $n$ nodes and average degree $\Delta$, the tradeoff between label bit size $L$ and query decoding time $T$ for our approach is given by $L = \bigo(n \log \log_\Delta T / \log_\Delta T)$, for any $T \leq n$. Our simple approach is thus the first sublinear-space distance labeling for sparse graphs that simultaneously admits small decoding time (for constant $\Delta$, we can achieve any $T=\omega(1)$ while maintaining $L=o(n)$), and it also provides an improvement in terms of label size with respect to previous slower approaches.

By using similar techniques, we then present a $2$-additive labeling scheme for general graphs, i.e., one in which the decoder provides a 2-additive-approximation of the distance between any pair of nodes. We achieve almost the same label size-time tradeoff $L = \bigo(n \log^2 \log T / \log T)$, for any $T \leq n$. To our knowledge, this is the first additive scheme with constant absolute error to use labels of sublinear size. The corresponding decoding time is then small (any $T=\omega(1)$ is sufficient).

We believe all of our techniques are of independent value and provide a desirable simplification of previous approaches.
\end{abstract}

\thispagestyle{empty}

\newpage

\section{Introduction}
%Do related work: This improves the currently best scheme of Alstrup \etal \cite{DBLP:conf/soda/AlstrupGHP16} which, for $r$-additive uses labels of size $C_r n+o(n)$, (where $C_r \sim \frac{2}{r}$ is a constant depending on $r$), in particular for $2$-additive distances uses labels of size $\frac{\log 5}{4}n + o(n)$ (with decoding time $\oh(1)$).
Distance labeling schemes, popularized by Gavoille et al.~\cite{Gavoille:2004:DLG:1036161.1036165}, are among the most fundamental distributed data structures for graph data. The design problem combines two major challenges. First of all, distance labelings serve the role of a \emph{distance oracle}, i.e., a data structure which for a given undirected graph $G=(V,E)$ can answer queries of the form: ``what is the distance between the nodes $s, t \in V$?''. Throughout most of this paper, we will assume that $G$ is an unweighted graph with $n$ nodes and $m$ edges. The efficiency of a distance oracle is measured by the interplay between the \emph{space} requirement of the data structure representation, the \emph{encoding time} required to set up the oracle for a given graph, and perhaps more importantly, its \emph{decoding time}, that is, the time of processing a $s-t$ distance query. Moreover, a distance labeling scheme is defined more restrictively than a distance oracle, as an assignment of a binary string (label) $\encoding(u)$ to each node $u\in V$, so that the graph distance between $u$ and $v$ is uniquely determined by the pair of labels: $\encoding(u)$ and $\encoding(v)$. The \emph{size} of a distance labeling scheme is now the maximum length of a node label in the graph. In this way, distance labelings add an extra layer of complexity to the graph distance decoding problem, by imposing a distributed representation of information in the labels $(\encoding(u) : u\in V)$. Whereas the concatenation of all $n$ labels in a distance labeling forms a centralized distance oracle, distance labelings can also be applied in a distributed setting, in which the label of each node is stored at a distinct location in the network. This is the case, for instance, in applications in compact routing protocols, where the goal is to find a shortest path from a source node to a target node with a known label~\cite{ChepoiDEHVX12}.

An interesting characteristic of the problem of distance oracle design for sparse graph is its inherent link to an underlying set intersection task. On the side of lower bounds, this is most clearly observed, following Pătraşcu and Roditty~\cite{PatrascuR14}, when we consider a pair of vertices belonging to the same partition of a bipartite graph. The distance between them is $2$ if and only if the sets of their neighbors intersect, and at least $4$ otherwise. Consequently, assuming a plausible conjecture on the space required to decide intersection of a set of small sets, it follows that any oracle for graphs with $\widetilde \bigo(1)$ maximum degree, which admits constant decoding time, requires $\widetilde \Omega (n^2)$ space. (Here, the $\widetilde O$ and $\widetilde \Omega$ notation disregards polylogarithmic factors in $n$.) By contrast, many efficient algorithms for answering distance queries in real-world scenarios rely on the premise that the distance between a pair of nodes can be computed using an intersection-type query on a pair of small sets. In the basic framework of hub labelings, see \cite{Abraham:2012:HHL:2404160.2404164}, (introduced in~\cite{Cohen:2003:RDQ:942270.944300} under the name of 2-hop covers, and also referred to as landmark labelings~\cite{Abraham11onapproximate}), each node $u\in U$ stores the set of its distances to some subset $S(u) \subseteq V$ of other nodes of the graph. Then, the computed distance value $\da(u,v)$ for a queried pair of nodes $u, v\in V$ is returned as:
\begin{equation}\label{eq:distance}
\da(u,v) := \min_{w \in S(u)\cap S(v)} \cost(u,w) + \cost(w,v),
\end{equation}
where $\cost$ denotes the shortest path distance function between a pair of nodes. The computed distance between all pairs of nodes $u$ and $v$ is exact if set $S(u)\cap S(v)$ contains at least one node on some shortest $u-v$ path. This property of the family of sets $(S(u) : u\in V)$ is known as \emph{shortest path cover}. The hub-based method of distance computation is in practice effective for two reasons. First of all, for transportation-type networks it is possible to show bounds on the sizes of sets $S$, which follow from the network structure. Notably, Abraham et al.~\cite{doi:10.1137/1.9781611973075.64} introduce the notion of highway dimension $h$ of a network, which is presumed to be a small constant e.g.\ for road networks, and show that an appropriate cover of all shortest paths in the graph can be achieved using sets $S$ of size $\widetilde \bigo(h)$. Moreover, the order in which elements of sets $S(u)$ and $S(v)$ is browsed when performing the minimum operation is relevant, and in some schemes, the operation can be interrupted once it is certain that the minimum has been found, before probing all elements of the set. This is the principle of numerous heuristics for the exact shortest-path problem, such as contraction hierarchies and algorithms with arc flags~\cite{Kohler06fastpoint-to-point,Bauer:2010:SFR:1498698.1537599}.

In this work, we make use of the hub set techniques to obtain better (distributed) distance labelings. Whereas $\Omega(n)$ is a lower bound of the size of a hub set for general graphs, we provide hub-based schemes using smaller sets for specific case, leading to labels which can be encoded on $o(n)$ bits. Our scheme provides a shortest-path cover in the class of sparse graphs (with average degree $\Delta=2m/n$ subpolynomial in $n$). This construction is overviewed in more detail in Section~\ref{sec:our_results}.

The implications of our result can be seen as twofold. First of all, our approach directly leads to labeling of smaller size (and smaller decoding time) for exact distance queries in sparse graphs than all previous distance labeling approaches. Additionally, an corollary of our result concerns the case of \emph{$k$-additive} approximate distance labeling, in which the distance decoder is required to return an upper bound on the shortest path length which is within an additive factor of at most $k$ from the optimum. So far, no way to construct $k$-additive distance labels using labels of sublinear size in $n$ was known for any constant $k>0$. (This question was considered previously in, e.g.,~\cite{DBLP:conf/soda/AlstrupGHP16}). We provide a way to construct a $2$-additive distance labeling in general graphs using distance labels of size $o(n)$. (This result is essentially the best possible, since a $1$-additive distance labeling requires distance labels of size at least $n/4$ already on the class of bipartite graphs.)

In our approaches, the size of the obtained distance labels for the considered cases is improved with respect to the state-of-the-art by up to a logarithmic multiplicative factor. Rather than seeing this as ``gaining'' a logarithm, we rather see this as ``not losing'' a logarithm. Indeed, the basic ingredient of the hub sets in previous approaches was a subset of nodes, sampled independently at random from $V$~\cite{BCE05,Sublinear}. The constructions then relied on the probabilistic method to guarantee that the hubs would have the shortest-path cover properties, based on the premise that for each pair of nodes the constructed hubs provide a shortest path cover with sufficiently high probability. The derandomization of this process resulted in a loss of a logarithmic factor in the analysis of the size of labels. Our approach shows how to avoid this issue: when constructing labelings for sparse graphs, we do away with randomization altogether, relying on simple structural results to replace the random subset of nodes.

\subsection{Related Work}

\paragraph{Distance Labelings.}
The distance labeling problem in undirected graphs was first investigated by Graham and Pollak~\cite{pollak}, who provided the first labeling scheme with labels of size $\bigo(n)$. The decoding time for labels of size $\bigo(n)$ was subsequently improved to $\bigo(\log \log n)$ by Gavoille \etal~\cite{Gavoille:2004:DLG:1036161.1036165} and to $\bigo (\log^* n)$ by Weimann and Peleg~\cite{WP11}. Finally, Alstrup \etal~\cite{DBLP:conf/soda/AlstrupGHP16} present a scheme for general graphs with decoding in $\bigo(1)$ time using labels of size $\frac{\log 3}{2} n + o(n)$ bits.\footnote{For the sake of sanity of the notation, we define $\log x = \max(1, \log_2(x))$.} This matches up to low order terms the space of the currently best known distance oracle with $\bigo(1)$ time and $\frac{\log 3}{2} n^2 + o(n^2)$ total space in a \emph{centralized} memory model, due to Nitto and Venturini~\cite{NV08}.

The notion of $D$-preserving distance labeling, first introduced by Bollob\'as \etal~\cite{BCE05}, describes a labeling scheme correctly encoding every distance that is at least $D$. \cite{BCE05} presents such a $D$-preserving scheme of size $\bigo(\frac{n}{D} \log^2 n)$.
This was recently improved by Alstrup \etal~\cite{Sublinear} to a $D$-preserving scheme of size $\bigo(\frac{n}{D}\log^2 D)$. Together with an observation that all distances smaller than $D$ can be stored directly, this results in a labeling scheme of size $\bigo(\frac{n}{\symb}\log^2 \symb)$, where $\symb = \frac{\log n}{\log \frac{m+n}{n}}$. For sparse graphs, this is $o(n)$.

For specific classes of graphs, Gavoille \etal~\cite{Gavoille:2004:DLG:1036161.1036165} described a $\bigo(\sqrt{n}\log n)$ distance labeling for planar graphs, together with $\Omega(n^{1/3})$ lower bound for the same class of graphs. Additionally, $\bigo(\log^2 n)$ upper bound for trees and $\Omega(\sqrt{n})$ lower bound for sparse graphs were given.

\paragraph{Distance Labeling with Hub Sets.} For a given graph $G$, the computational task of minimizing the sizes of hub sets $(S(u) : u\in V)$ for exact distance decoding is relatively well understood. A $\bigo(\log n)$-approximation algorithm for minimizing the average size of a hub set having the sought shortest path cover property was presented in Cohen \etal~\cite{Cohen:2003:RDQ:942270.944300}, whereas a $\bigo(\log n)$-approximation for minimizing the largest hub set at a node was given more recently in Babenko \etal~\cite{DBLP:conf/icalp/BabenkoGGN13}. Rather surprisingly, the structural question of obtaining bounds on the size of such hub sets for specific graph classes is wide open. For example, for the class of graphs of constant maximum degree, there is a large gap between the hub sets in our construction (of size $\bigo(n/\log n)$) and the generic lower bound of $\widetilde \Omega(\sqrt n)$.

%The $D$-preserving scheme presented in~\cite{Sublinear} relies on an application of the probabilistic
%method and can be rephrased as follows.
%We fix one shortest path between every pair of nodes in the same connected component.
%The goal is to find a set $S$ of vertices intersecting with (covering) many such shortest paths
%that are of length at least $D$.
%We proceed in stages. At each stage of the algorithm, if there are $x$
%still uncovered paths, by greedily picking a new vertex we can surely cover
%$\Omega(x\cdot\frac{D}{n})$ new paths. Thus, since initially there are at most $n^2$ paths,
%we obtain a set $S$ of size $\bigo(\frac{n}{D} \log D)$ covering all but $\frac{n^2}{D^2}$ paths.
%Now let $T$ be the set of vertices being the endpoints of at least $\frac{n}{D}$ still uncovered
%paths, and notice that $|T|=\bigo(\frac{n}{D})$. The label of every node consists of the distances
%to all nodes from $S$ and $T$. Additionally, every node $v\in V\setminus T$ also stores the distances
%to all nodes which are the other endpoints of the shortest paths starting at $v$; by the definition
%of $T$ there are $\bigo(\frac{n}{D})$ such nodes. The total number of distances stored
%at every node is thus $\frac{n}{D}\log D$, and they can be stored with $\log D$ multiplicative
%overhead (cf. Lemma~\ref{lem:set_encoding}), i.e., the labels are of size $\frac{n}{D}\log^2 D$.

\paragraph{Distance Oracles.}

A centralized version of distance labeling problem is \emph{distance oracle} problem, where one asks for a centralized data structure allowing for querying a distance between pair of vertices. There usually one asks for what type of tradeoffs are possible between size of the structure, time of the query and allowed error (multiplicative stretch). Sommer~\etal \cite{sommer2009distance} proved that any constant time, constant stretch oracle must be superlinear in $n$. Thorup and Zwick \cite{Thorup:2001:CRS:378580.378581} proved that distance oracles of stretch 2 require $\Omega(n^2)$ space, and of stretch 3 require $\Omega(n^{3/2})$ space. Pătraşcu and Roditty \cite{PatrascuR14} strengthened the lower bound for stretch 2, proving a lower bound of $\Omega(n\sqrt{m})$ on the size of oracles with constant query time. For general weighted graphs, Thorup and Zwick \cite{Thorup:2001:CRS:378580.378581} designed a distance oracle of size $\bigo(kn^{1+1/k})$, stretch-$(2k-1)$ and $\bigo(k)$ time. The query time has been improved to $\bigo(\log k)$ time by Wulff-Nilsen \cite{doi:10.1137/1.9781611973105.39}, and to constant time in Chechik \cite{Chechik:2014:ADO:2591796.2591801}. The size of the distance oracle from \cite{Thorup:2001:CRS:378580.378581} is optimal assuming girth-conjecture. For sparse graphs, \cite{PatrascuR14} design distance oracle of size $\bigo(n^{4/3}m^{1/3})$ and stretch 2 in constant time. Also in \cite{PatrascuR14}, a conditional lower bound of $\widetilde\Omega(n^2)$ bits for a constant time distance oracle is provided. Cohen and Porat \cite{DBLP:journals/corr/abs-1006-1117} extended this result to sparse graphs.  An up-to-date survey of results on approximate distance oracles is provided in~\cite{DBLP:reference/algo/Roditty15a}.

\subsection{Our Results and Organization of the Paper}\label{sec:our_results}

We start by introducing the necessary conventions in Section~\ref{sec:preliminaries}. We also describe the basic building block for encoding distance labels, namely, an efficient method of storing the hub set of a node, together with corresponding distances, in its distance label.

In Section~\ref{sec:sparse}, we show how to construct an exact distance labeling scheme for graphs of bounded maximum degree. This relies on hub sets which consist, for a vertex $u$ of the union of all nodes from a small ball around vertex $u$, and all nodes from a selection of equally-spaced levels of the breadth-first-search tree of $u$. We then apply a trick, known from the previous work of \cite{Stretch}, to reduce the problem of constructing a labeling scheme for a graph with bounded average degree to that of constructing a labeling scheme for a~bounded-degree graph on twice as many nodes. For graphs with at most $n$ nodes and average degree $\Delta$, the tradeoff between label bit size $L$ and query decoding time $T$ for our approach is given by $L = \bigo(n \log \log_\Delta T / \log_\Delta T)$, for any $T \leq n$.
In particular, setting $T=n$, we obtain labels of size $\bigo(\frac{n}{R} \log R)$, which improves previously best result \cite{Sublinear} by a factor of $\log R$, keeping the $\widetilde\bigo(n)$ decoding time. On the other end, setting $T=\log n$ we obtain first sublinear size distance labeling that achieves almost-constant decoding time.

In Section~\ref{sec:general}, we adapt our approach to general graphs, using a variant of the proposed labeling scheme for sparse graphs to achieve 2-additive approximation of distances. As before, we achieve a tradeoff between label size $L$ and time $T$ of the form $L = \bigo(n \log^2 \log T / \log T)$, for any $T \leq n$. This 2-additive distance labeling scheme can be easily transformed into an exact one, by encoding the difference between the estimation and the true distances. Since this difference is always from $\{0,1,2\}$, we achieve labels of size $\frac{\log 3}2 n + o(n)$ (with any $\omega(1)$ decoding time), or of size $(\frac {\log 3}2 + \varepsilon) n$ (with $\bigo(1)$ decoding time), for any $\varepsilon>0$. Our approach almost matches the size of the best known distance labeling schemes~\cite{DBLP:conf/soda/AlstrupGHP16}, which make use of labels of size $\frac{ \log 3}{2} n + o(n)$ to achieve $\bigo(1)$ decoding time. Arguably, our approach may be considered simpler.

We remark that all our results apply to unweighted graphs, in which each edge has unit length. For sparse graphs, in which each edge has an integer weight from some interval $[1,W]$, we can use the same hub sets with an appropriately modified encoding to achieve a time-label tradeoff of $L = \bigo(n \log \log_\Delta T \log W/ \log_\Delta T)$. For the additive scheme, by subdividing each edge of length $w \in [1,W]$ into a chain of unweighted edges (of length 1), we achieve a conversion of the 2-additive distance labeling scheme into a $(2W)$-additive-distance scheme for weighted graphs.

\section{Preliminaries}\label{sec:preliminaries}

\paragraph{Notation and Conventions.}
Even though we are mainly interested in unweighted graphs, for technical reasons in Sections~\ref{sec:sparse} and~\ref{sec:general} we will work in a more general setting where every edge of a graph has a fixed cost from the set $\{0,1\}$. $\cost(u,v)$ denotes the cost of a cheapest path
connecting a pair of nodes $u$ and $v$, and $\length(u,v)$ denotes the smallest number of edges on such a~path. We will require the constructed distance labeling to return the value of $\cost(u,v)$. The degree of a node $v$ is denoted by $\deg(v)$. When analyzing the complexity of the decoding, we assume standard word RAM with logarithmic word size, where we are allowed to access $\log n$ consecutive bits of the stored binary string in constant time.

From now on, we assume that the graph is connected. This is enough because we can always include the identifier of its connected component in the label of every node, and return $\infty$ if $u$ and $v$ belong to different connected components; this only induces additive $\log n$ overhead to the label size.

\paragraph{Encoding Distances and Identifiers.}

The basic procedure for encoding a hub set in a label exploits some ideas from~\cite{DBLP:conf/soda/AlstrupGHP16}; we provide a self-contained exposition for completeness. We fix an arbitrary spanning tree of the graph and assign preorder numbers in the tree to the nodes, i.e., node numbered $1$ corresponds to the root and so on. The preorder number of a node $u$ is denoted by $\name(u)$. Such a~numbering has the following useful property.

\begin{lemma}
\label{lem:sum}
Let $v_1,v_2,\ldots,v_n$ be the preorder sequence of all nodes.
Then, for any node $u$, $\sum_{i=2}^{n}|\cost(u,v_{i-1})-\cost(u,v_i)|\leq 2n$.
\end{lemma}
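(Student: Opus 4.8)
The plan is to exploit the tree structure behind the preorder numbering. The key observation is that consecutive nodes $v_{i-1}$ and $v_i$ in a preorder traversal are connected by a short path \emph{in the spanning tree}, and distances from any fixed source $u$ change slowly along tree edges.

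First I would recall the standard fact about preorder (DFS) traversals: if we walk through the nodes $v_1, v_2, \ldots, v_n$ in preorder and, between consecutive visited nodes, follow the unique tree path connecting them, then the total number of tree edges traversed is exactly $2(n-1)$, since each of the $n-1$ tree edges is used exactly twice (once going ``down'' into a subtree and once coming back ``up''). Equivalently, $\sum_{i=2}^{n}\length_T(v_{i-1},v_i) = 2(n-1) < 2n$, where $\length_T$ denotes the number of edges on the tree path. This is the combinatorial heart of the argument and is where I would spend the most care in writing, since it is the one nontrivial step; everything else is a short reduction to it. One clean way to see it: the Euler tour of the tree visits the nodes in preorder, passing along a sequence of $2(n-1)$ edges, and the tree path from $v_{i-1}$ to $v_i$ is exactly the contiguous stretch of the Euler tour between their first occurrences.

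Next I would bound each term $|\cost(u,v_{i-1})-\cost(u,v_i)|$ by the tree distance $\length_T(v_{i-1},v_i)$. This is immediate from the triangle inequality applied along the tree path: each tree edge has cost $0$ or $1$ (in the general weighted-$\{0,1\}$ setting of the paper, even more simply cost at most $1$), so moving along one tree edge changes $\cost(u,\cdot)$ by at most $1$, and hence $|\cost(u,v_{i-1})-\cost(u,v_i)| \le \length_T(v_{i-1},v_i)$ by summing over the edges of the path.

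Finally I would combine the two bounds:
\[
\sum_{i=2}^{n} |\cost(u,v_{i-1})-\cost(u,v_i)| \;\le\; \sum_{i=2}^{n} \length_T(v_{i-1},v_i) \;=\; 2(n-1) \;\le\; 2n,
\]
which is the claimed inequality. The main obstacle, such as it is, is simply stating and justifying the double-counting identity for the Euler tour cleanly; the rest is routine. One minor subtlety to flag is that the bound $|\cost(u,v_{i-1})-\cost(u,v_i)|\le 1$ per tree edge uses only that tree edges have cost at most $1$, so the lemma holds verbatim in the $\{0,1\}$-cost model used later in the paper.
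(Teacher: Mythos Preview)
Your proof is correct and follows essentially the same approach as the paper: use the Euler tour of the spanning tree to bound $\sum_i \length_T(v_{i-1},v_i)$ by $2n$, then apply the triangle inequality along tree edges. You are in fact a bit more explicit than the paper (which simply says ``the claim follows'' after observing the total tour length is at most $2n$), and your remark about the $\{0,1\}$-cost model is a nice touch.
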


\begin{proof}
Consider an Euler tour corresponding a traversal of the chosen spanning tree. Every node is visited at least
once there, and the total length of the tour is at most $2n$. Consequently, we can cut the tour
into paths connecting node $v_{i-1}$ with node $v_i$, for every $i=2,3,\ldots,n$. The total length
of all these paths is at most $2n$ and the claim follows.
\end{proof}

The following lemma is used for encoding a hub set $S$ using $\bigo(|S| \log (n/|S|))$ bits.

\begin{lemma}
\label{lem:set_encoding}
For a fixed $v$ and set $S$ such that $|S| \le \frac{n}{x}$, set $S$ and all of the distances $\cost(v,u)$ for $u \in S$ can be stored in $\bigo(\frac{n}{x} \log x)$ bits. For any constant $t>0$, the representation
can be augmented with $\bigo(\frac{n}{\log^t n})$ additional bits so that all elements of $S$ can
be extracted one-by-one in $\bigo(|S|)$ total time and given any $u$ we can check if $u\in S$
(and if so, extract $\cost(u,v)$) in $\bigo(1)$ time.
\end{lemma}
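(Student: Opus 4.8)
The plan is to encode $S$ together with the distances $(\cost(v,u) : u \in S)$ as a single bit string built from two interleaved pieces of information: the \emph{gaps} between consecutive elements of $S$ in preorder, and the \emph{distance differences} $\cost(v,v_i) - \cost(v,v_{i-1})$ restricted to the indices where these are needed. First I would list $S = \{v_{i_1}, v_{i_2}, \ldots, v_{i_k}\}$ in increasing preorder, with $k = |S| \le n/x$, and record the gap sequence $g_j := i_j - i_{j-1}$ (taking $i_0 = 0$). Since $\sum_j g_j \le n$ and there are $k$ gaps, by concavity of $\log$ the total cost of writing all gaps with a self-delimiting (e.g.\ Elias gamma) code is $\bigo(\sum_j \log g_j) = \bigo(k \log(n/k)) = \bigo(\frac{n}{x}\log x)$. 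For the distances, note that it suffices to store $\cost(v,v_{i_1})$ explicitly (using $\bigo(\log n)$ bits) and then, for each subsequent element, the quantity $\cost(v,v_{i_j}) - \cost(v,v_{i_{j-1}})$. The absolute value of this is at most $\sum_{\ell=i_{j-1}+1}^{i_j} |\cost(v,v_\ell) - \cost(v,v_{\ell-1})|$, so encoding each such difference in a self-delimiting code of length $\bigo(\log(\text{value}+2))$ and invoking Lemma~\ref{lem:sum} together with concavity again bounds the total by $\bigo(k + k\log(2n/k)) = \bigo(\frac{n}{x}\log x)$. Interleaving one gap-codeword and one distance-difference codeword per element of $S$ yields a single string of the claimed length.

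The second part is to make this string \emph{navigable} in the word-RAM model. Here I would add a rank/select index over the string to locate the starting bit position of the $j$-th codeword-pair in $\bigo(1)$ time; using a standard succinct rank/select structure over a marker bit-vector of length equal to the encoding (which is $\bigo(n)$ bits), this index occupies $\bigo(n \log\log n / \log n)$ bits — comfortably within the allowed $\bigo(n/\log^t n)$ slack if we instead sample positions every $\log^{t+1} n$ codewords and keep a small auxiliary table, or more simply by noting the encoding length is itself $o(n)$ so an $o(\cdot)$-overhead index is affordable. With position-access in hand, extracting all elements one by one in $\bigo(k)$ total time is immediate: sweep the gap codewords left to right, maintaining the running preorder index, and simultaneously maintain the running distance by adding each decoded difference. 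For the membership query ``is $u \in S$, and if so what is $\cost(u,v)$'', I would precompute, for each of the $\bigo(n/\log n)$ blocks of $\log n$ consecutive preorder indices, a pointer to the first element of $S$ in that block (or the nearest one after it); given $\name(u)$ we jump to the relevant block, then scan $\bigo(1)$ codewords within it — but a naive block may contain many elements of $S$, so instead I would use the rank structure directly: binary-search-free, we store for each element $v_{i_j}$ its preorder index $i_j$ in a predecessor structure. Since the universe is $[n]$ and we have $k \le n/x$ keys, a $y$-fast-trie-style or the simpler ``indexable dictionary'' of Raman–Raman–Rao gives $\bigo(1)$ membership with $\bigo(k \log(n/k)) + o(n)$ bits, and returning the associated $\cost(u,v)$ then costs another $\bigo(1)$ once we have the rank $j$, provided we additionally store a sampled array of absolute distances $\cost(v,v_{i_j})$ at every $(\log n)$-th index of $S$ so that reconstructing $\cost(v,v_{i_j})$ from the nearest sample requires summing only $\bigo(\log n)$ difference-codewords — wait, that is $\bigo(\log n)$, not $\bigo(1)$.

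To get genuine $\bigo(1)$ distance retrieval I would instead store the absolute distances $\cost(v,v_{i_j})$ in a separate array, but differentially against a coarser backbone: group the $k$ elements into $\lceil k/\log n\rceil$ runs, store the exact distance at the start of each run in $\bigo(\log n)$ bits (total $\bigo(k) = o(n)$ bits), and within a run store the $\bigo(\log n)$ prefix sums of the difference-codewords in a small packed table that can be indexed in $\bigo(1)$; since each run's prefix sums fit in $\bigo(\log^2 n / \log n) = \bigo(\log n)$ bits total when the within-run distances stay bounded, and Lemma~\ref{lem:sum} caps the global sum of $|$differences$|$ at $2n$ so only $o(n/\log n)$ runs can have large spread, the whole auxiliary structure is $\bigo(n \log\log n/\log n) = \bigo(n/\log^{t} n)$ after adjusting the run length to $\log^{t+1} n$. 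I expect the main obstacle to be exactly this last point — reconciling $\bigo(1)$ \emph{distance} extraction (not just $\bigo(1)$ membership) with only $\bigo(n/\log^t n)$ extra bits, since the distance values are only available as differences and summing a whole block is too slow; the resolution is the two-level sampling above, charging the rare ``expensive'' blocks against the global budget $2n$ from Lemma~\ref{lem:sum}, which keeps both the number of explicitly-stored absolute distances and the size of the packed prefix-sum tables under control. The gap-encoding and the $\bigo(\frac{n}{x}\log x)$ size bound are routine by the concavity argument; the navigability layer is where the care is needed, but all the ingredients (rank/select, RRR indexable dictionaries, packed $\bigo(1)$-lookup tables on $o(\log n)$-bit words) are standard.
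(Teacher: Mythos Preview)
Your base encoding---Elias~$\gamma$ on the preorder gaps and on the consecutive distance differences, bounded via concavity and Lemma~\ref{lem:sum}---is exactly the paper's, and your membership test via an indexable dictionary over $[n]$ with $|S|$ keys is equivalent to the paper's rank/select bit-vector $B[1..n]$ stored in P\v{a}tra\c{s}cu's succinct format.

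The gap is in the $\bigo(1)$ distance retrieval. Your two-level sampling does not meet the $\bigo(n/\log^t n)$ budget for the augmentation: a run of $r$ elements with internal spread $s$ needs $\Theta(r\log(s{+}2))$ bits for its packed prefix-sum table, and summing over the $k/r$ runs (using $\sum_j s_j\le 2n$ and concavity) gives $\Theta\bigl(k\log(nr/k)\bigr)$ bits in total. For small $x$---say $x=\bigo(1)$, hence $k=\Theta(n)$---this is $\Theta(n\log\log n)$, not $\bigo(n/\log^t n)$. The ``rare expensive runs'' charging does not rescue the bound, because the cost is incurred uniformly across runs, not concentrated in a few. (Your side remark that the base encoding has length $o(n)$, so an $o(\cdot)$ index is free, is also false in this regime.)

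The paper avoids all of this with a single device you are missing: encode the signed differences $\eta_i=\cost(v,v_{i})-\cost(v,v_{i-1})$ in \emph{unary} inside two further bit-vectors $B_+$ and $B_-$, appending ${\bf 0}^{\max(\eta_i,0)}{\bf 1}$ to $B_+$ and ${\bf 0}^{\max(-\eta_i,0)}{\bf 1}$ to $B_-$ for each $i$. Each bit-vector then has $|S|$ ones and at most $2n$ zeros by Lemma~\ref{lem:sum}, so P\v{a}tra\c{s}cu's structure stores each in $\bigo(\frac{n}{x}\log x)+\bigo(\frac{n}{\log^t n})$ bits with $\bigo(1)$-time rank/select. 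The $j$-th prefix sum of the differences is now simply the number of {\bf 0}s preceding the $j$-th {\bf 1} in $B_+$ minus the same quantity in $B_-$, recovered with a constant number of rank/select queries---no sampling, no packed tables, no case analysis on run spreads.
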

%\emph{(All omitted proofs are provided in the Appendix.)}

\begin{proof}
Let $S = (v_1,\ldots,v_{|S|})$, where $\name(v_1) < \name(v_2) < \ldots < \name(v_{|S|})$. We store $\name(v_1)$ and then the differences $\name(v_2) - \name(v_1), \ldots, \name(v_{|S|}) - \name(v_{|S|-1})$. Every difference is encoded using the Elias $\gamma$ code (see Elias~\cite{EliasGamma}), and the encodings are concatenated to form one binary
string. We are storing up to $\frac{n}{x}$ integers whose absolute values sum up to at most $n$, so by Jensen's inequality this takes $\bigo(\frac{n}{x} \log x)$ bits in total. Similarly, we store $\cost(u,v_1)$ and then the differences $\cost(u,v_2) - \cost(u,v_1), \ldots, \cost(u,v_{|S|}) - \cost(u,v_{|S|-1})$. By Lemma~\ref{lem:sum} we are again storing up to $\frac{n}{x}$ numbers whose absolute values sum up to at most $2n$, which takes $\bigo(\frac{n}{x} \log x)$ bits.

All $v_i$ can be extracted one-by-one in $\bigo(1)$ time each with standard bitwise operations.
To facilitate checking if $x\in S$ in $\bigo(1)$ time, we observe that it is enough to store a bit-vector
$B[1..n]$, where the $\name(v_i)$-th bit is set to {\bf 1}, for every $i=1,2,\ldots,|S|$. Then checking if
$x\in S$ reduces to two $\rank_{\bf 1}$ queries. A $\rank_{\bf 1}$ query counts {\bf 1}s in the specified prefix of the bit-vector and a $\select_{\bf 1}$ query returns the position of the $k$-th {\bf 1} in the bit-vector. By
the result of P\v{a}tra\c{s}cu~\cite{Succincter}, for any constant $t>0$, a bit-vector of length $n$ containing $\frac{n}{x}$
{\bf 1}s can be stored using
\[\log{ n \choose \frac{n}{x}}+\bigo(\frac{n}{\log^t n})=\bigo(\frac{n}{x}\log x)+\bigo(\frac{n}{\log^t n})\]
bits so that any rank or select query can be answered in $\bigo(t)$ time.
This allows us to check if $u\in S$ and calculate $i$ such that $u=v_i$ in $\bigo(t)$ time. To retrieve
$\cost(u,v_i)$, we store two additional bit-vectors $B_+$ and $B_-$. Each of them contains exactly
$\frac{n}{x}$ {\bf 1}s and up to $2n$ {\bf 0}s. The bit-vectors are defined as follows. For each $i=2,3,\ldots,n$
we consider the difference $\eta=\cost(u,v_{i})-\cost(u,v_{i-1})$. If $\eta\geq 0$, we append
${\bf 0}^\eta {\bf 1}$ to $B_+$ and ${\bf 1}$ to $B_-$. Otherwise, we append
${\bf 1}$ to $B_+$ and ${\bf 0}^{-\eta}{\bf 1}$ to $B_-$. By Lemma~\ref{lem:sum}, each of these
two bit-vectors contains at most $2n$ {\bf 0}s, so they can be stored using
$\bigo(\frac{n}{x}\log x+\frac{n}{\log^t n})$ bits so that any rank or select query can be answered
in $\bigo(t)$ time. To recover $\cost(u,v_i)$, we need to sum up all the differences. This reduces
to summing up all positive and all negative differences separately, which can be done using the
corresponding bit-vector with one $\rank_{\bf 1}$ and one $\select_{\bf 0}$ query in $\bigo(t)$ total time.
\end{proof}

We remark that the above encoding and decoding scheme is efficient for sets of size $|S| = \widetilde \bigo(n)$. For smaller sets, we will simply use an explicit encoding of all distances in $S$, requiring $\bigo(|S|\log n)$ bits.

\section{Exact Distance Labeling in Sparse Graphs}\label{sec:sparse}

\subsection{Graphs of Bounded Maximum Degree}\label{sec:bdg}

In this subsection, we assume that $\deg(u)\leq \Delta$ for every node $u$. We consider distance labeling schemes characterized by a time parameter $T$. Intuitively, in the construction, $\symb=\frac{\log T}{\log\Delta}$ will be a threshold parameter, distinguishing small distances from large distances in the graph --- a node will be able to afford to explicitly store the distances and identifiers of all nodes up to some distance $\bigo(\symb)$ from itself in its distance label. Although this case is of independent interest, we are considering it as a building block for construction of labeling in graphs of bounded average degree. Thus graphs considered here are weighted with edge weights from $\{0,1\}$, for the reason explained in Section~\ref{sec:bounded_average}.

The rest of this subsection is devoted to the proof of the following Theorem.

\begin{theorem}\label{thm:bounded}
Fix any value $\Delta \le T \le n$  and let $\symb=\log_{\Delta} T$. In bounded-degree graphs, there is a labeling scheme of size $\bigo(\frac{n}{\symb} \log \symb)$ and decoding time $\bigo(T)$.
\end{theorem}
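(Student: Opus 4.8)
The plan is to define, for each node $u$, a hub set $S(u)$ that is simultaneously a shortest-path cover and small enough to be encoded via Lemma~\ref{lem:set_encoding}. The natural candidate, matching the paper's own outline, is to take $S(u) = B(u,c\symb) \cup L(u)$, where $B(u,c\symb)$ is the ball of all nodes at distance at most $c\symb$ from $u$ (for a suitable constant $c$), and $L(u)$ is the set of all nodes lying on ``equally-spaced levels'' of the BFS tree of $u$ --- that is, all nodes $w$ with $\length(u,w) \equiv 0 \pmod{\symb}$. First I would verify the \emph{shortest-path cover} property: for any pair $u,v$, if $\cost(u,v) \le c\symb$ then $v \in B(u,c\symb) \subseteq S(u)$ and trivially $v \in S(v)$, so $v$ itself is a valid hub; if $\cost(u,v) > c\symb$, then along a shortest $u$–$v$ path there is some node $w$ whose BFS-level from $u$ is a multiple of $\symb$, hence $w \in L(u) \subseteq S(u)$; the same node $w$ also lies on a shortest $v$–$u$ path, and one checks that its level from $v$ is also a multiple of $\symb$ (here one must be slightly careful because edge costs are in $\{0,1\}$, so I would work with $\length$ for the level bookkeeping and with $\cost$ for the returned value, and arrange the level spacing so that a shortest path of hop-length $\ell$ contains a node at every residue class mod $\symb$). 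This gives $\da(u,v) = \cost(u,v)$.

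Next I would bound $|S(u)|$. The ball contributes $|B(u,c\symb)| \le \Delta^{c\symb} = T^{c}$ nodes in the worst case, which is too large — so the ball radius must actually be a constant number of \emph{hops}, not $\Theta(\symb)$ hops; re-reading the theorem, the intended reading is that a node stores distances to all nodes within a constant ball, and $\symb = \log_\Delta T$ controls the \emph{level spacing} only, with the ``$\bigo(\symb)$'' in the prose referring to how the construction is tuned. Concretely, I would take $S(u)$ = (distances to all nodes at hop-distance $\le \symb$ from $u$) $\cup$ (all nodes at hop-level a multiple of $\symb$). Then each level $i\symb$ of the BFS tree, summed over $i$, partitions $V$, so $|L(u)| = \sum_i |\{w : \length(u,w) = i\symb\}|$ — this is just a single residue class and so is at most $n$, not obviously small. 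The fix is the standard averaging/shifting argument: there exists a shift $r \in \{0,1,\dots,\symb-1\}$ such that $|\{w : \length(u,w) \equiv r \pmod \symb\}| \le n/\symb$; we pick that residue $r=r(u)$ per node, store it in $O(\log\symb)$ extra bits, and use levels congruent to $r(u)$. Meanwhile the ball $B(u,\symb)$ has size at most $\Delta^\symb = T$. So $|S(u)| \le n/\symb + T$; since $T \le n$ and we may assume $T \le n/\symb$ in the interesting regime (otherwise $\symb=O(1)$ and the bound is trivial by storing everything), we get $|S(u)| = O(n/\symb)$.

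Now apply Lemma~\ref{lem:set_encoding} with $x = \Theta(\symb)$: the set $S(u)$ together with all distances $\cost(u,w)$, $w\in S(u)$, is stored in $\bigo(\frac n\symb \log\symb)$ bits, plus $\bigo(n/\log^t n)$ lower-order bits (which is $o(n/\symb)$ for appropriate $t$ as long as $\symb$ is not too large, and when $\symb = \Omega(\mathrm{polylog}\,n)$ one adjusts $t$ accordingly). For decoding a query $(u,v)$: first check whether $v \in S(u)$ (equivalently, whether $\cost(u,v)\le\symb$) using the $\bigo(1)$ membership query of Lemma~\ref{lem:set_encoding}; if so, return the stored $\cost(u,v)$. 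Otherwise, enumerate the elements of $S(u)$ one by one (total time $\bigo(|S(u)|) = \bigo(n/\symb)$, which must be $\bigo(T)$ — this forces $n/\symb \le T$, i.e. we are again in the regime $T \ge n/\log_\Delta T$; when this fails, $\symb$ is a constant and a direct argument applies), and for each $w\in S(u)$ test membership in $S(v)$ in $\bigo(1)$ time, retrieving $\cost(v,w)$ and updating the running minimum $\cost(u,w)+\cost(w,v)$. Correctness of the returned value follows from the shortest-path cover property and equation~\eqref{eq:distance}. The decoding time is $\bigo(|S(u)|) = \bigo(n/\symb)$; to match the claimed $\bigo(T)$ bound one uses $n/\symb = n/\log_\Delta T \le T$, which I should double-check is exactly the hypothesis regime — indeed it holds whenever $T\log_\Delta T \ge n$, and the complementary range can be absorbed by choosing a coarser parameter.

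The \textbf{main obstacle} is the interaction between the two halves of $S(u)$ and the parameter bookkeeping: getting the level-spacing $\symb$, the ball radius, and the Elias-$\gamma$ encoding parameter $x$ to line up so that $|S(u)| = O(n/\symb)$, the encoding is $O(\frac n\symb\log\symb)$ bits, \emph{and} the enumeration time is $O(T)$ all simultaneously — in particular verifying that $\Delta^\symb \le T$ really does make the ball negligible, and that the shifted residue class argument cleanly delivers the $n/\symb$ bound while only costing $O(\log\symb)$ extra bits per node. The shortest-path-cover verification in the $\{0,1\}$-weighted setting (distinguishing $\cost$ from $\length$ on shortest paths, and ensuring the chosen residue class is hit) is the other place where care is needed, though it is conceptually routine.
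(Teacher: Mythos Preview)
Your construction---ball of hop-radius $\symb'=\lfloor\symb\rfloor$ together with a shifted residue class of BFS levels, with the shift $r(u)$ chosen by averaging so that $|L_u(r(u))|\le n/\symb'$---is exactly the paper's. The size analysis via Lemma~\ref{lem:set_encoding} is also right.

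The genuine gap is in the decoding. You propose to enumerate \emph{all} of $S(u)$ and test each element for membership in $S(v)$; this costs $\Theta(|S(u)|)=\Theta(n/\symb)$, which is not $\bigo(T)$ except in the narrow regime $T\log_\Delta T\ge n$. Your suggestion that ``the complementary range can be absorbed by choosing a coarser parameter'' does not work: for, say, $T=\Delta$ we have $\symb=1$ and $n/\symb=n\gg T$, and no reparametrization rescues this. The paper avoids the problem by treating the two halves of the hub set asymmetrically at query time: it stores $B_u(\symb')$ \emph{explicitly} (separately from the compressed encoding of $L_u(\offset(u))$), and to answer a query $(u,v)$ it iterates only over $w\in B_u(\symb')$---at most $\Delta^{\symb'}\le T$ nodes---checking for each such $w$ whether $w\in L_v(\offset(v))$ via the $\bigo(1)$ membership query of Lemma~\ref{lem:set_encoding}. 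That gives $\bigo(T)$ decoding time across the full range $\Delta\le T\le n$.

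Correctness then also needs the asymmetric form: one shows that some $w$ on a shortest $u$--$v$ path lies in $B_u(\symb')\cap L_v(\offset(v))$ (pick the node at hop-distance $\alpha\symb'+\offset(v)$ from $v$ for the largest $\alpha$ keeping it within $\symb'$ hops of $u$). Your first paragraph's claim that a node at level $\equiv 0\pmod\symb$ from $u$ is also at level $\equiv 0\pmod\symb$ from $v$ is false once per-node offsets are introduced, but the asymmetric argument sidesteps this entirely.
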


Let us denote $\symb' = \lfloor \symb \rfloor$. Since $\symb' \ge 1$, we can bound $\symb \ge \symb' \ge \frac12 \symb$.
Consider a node $u$. The ball of radius $r$ centered at $u$, denoted $B_u(r)$, is the set of nodes
which can be reached from $u$ by following at most $r$ edges. Because the degrees of all nodes
are bounded by $\Delta$, $|B_u(r)| = \bigo(\Delta^{r})$. The $k$-th layer centered at $u$,
denoted $L_u(k)$, consists of all nodes $v$ such that $\length(u,v) = k \pmod \symb'$. Because
the layers are disjoint, there exists an $\offset(u)\in \{0,1,\ldots,\symb'-1\}$ such that
$|L_u(\offset(u))|\leq \frac{n}{\symb'}$.

\paragraph{Definition of the Labeling.} We define the hub set of node $u$, to which it stores all its distances, as $S(u) := B_u(\symb') \cup L_u(\offset(u))$, see Fig.~\ref{fig:layers}. Formally, the label of $u$ consists of the following:
\begin{enumerate}
\item $n$ and $\name(u)$,
\item $\name(v)$ and $\cost(u,v)$ for every $v\in B_u(\symb')$,
\item $\name(v)$ and $\cost(u,v)$ for every $v\in L_u(\offset(u))$.
\end{enumerate}

\begin{figure}[t]
  \centering
    \vspace*{-2mm}
    \includegraphics[scale=1]{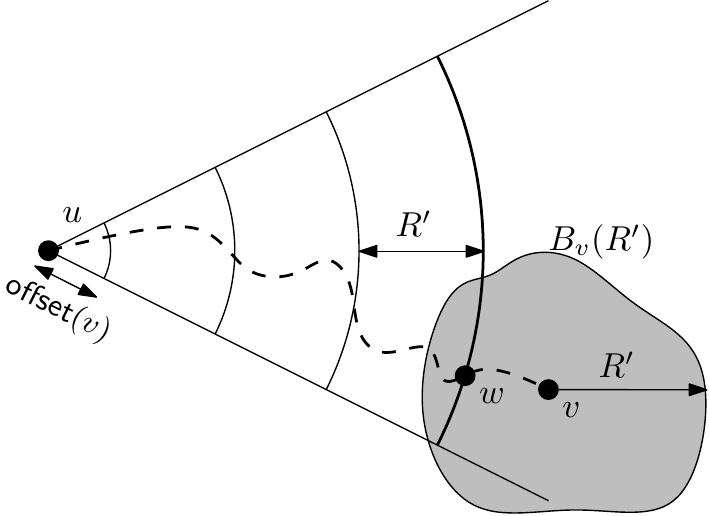}
    \vspace*{-2mm}
    \caption{Shortest path from $u$ to $v$ goes through $w$ which belongs to both $B_v(\symb')$ and $L_u(\offset(u))$.}
    \label{fig:layers}
\end{figure}

\paragraph{Computing $\cost(u,v)$.} For reasons of efficiency, we will not perform the distance decoding following Eq.~\eqref{eq:distance} directly, but we will treat the two components of the hub set of each node separately. Given $\name(u)$ and $\name(v)$, we can determine
$\cost(u,v)$ as follows. First we check if $v\in B_u(\symb')$ and if so return the stored $\cost(u,v)$.
Otherwise, we iterate through all nodes $w\in B_u(\symb')$ and check if $w \in L_v(\offset(v))$.
If so, we know $\cost(u,w)+\cost(w,v)$. We return the smallest such sum.

For the proof of correctness of the distance decoder, it is clear that $\cost(u,w)+\cost(w,v)\geq \cost(u,v)$ for any $w$, so it remains to argue that either $v \in B_u(\symb')$ or there exists $w\in B_u(\symb')$ such that $w\in L_v(\offset(v))$ and
$\cost(u,w)+\cost(w,v)=\cost(u,v)$. Consider a shortest path $(p_0, p_1, p_2, \ldots p_\length)$ where $v=p_0$ and $u=p_\length$ such that $\length = \length(u,v)$. If  $\length \leq \symb'$, $v\in B_u(\symb')$ and there is nothing to prove,
so we can assume that $\length > \symb'$.  Observe that for any $i=0,1,\ldots,\length$, $\length(v,p_i)=i$,
so in particular $p_{\alpha\cdot \symb'+\offset(v)}\in L_v(\offset(v))$ for any integer $\alpha \geq 0$.
We choose $\alpha=\left\lfloor \frac{\length-\offset(v)}{\symb'}\right\rfloor$ and $w=p_{\alpha \cdot \symb'+\offset(v)}$.
Then $w\in L_v(\offset(v))$, $w\in B_u(\symb')$ by the choice of $\alpha$, and $\cost(u,w)+\cost(w,v) = \cost(u,v)$
because $w$ lies on a shortest path connecting $u$ and $v$, so indeed we are able to correctly
determine $\cost(u,v)$.

\paragraph{Encoding and Size of the Scheme.} Encoding $n$ and $\name(u)$ takes $\bigo(\log n)$ bits.
The set $B_u(\symb')$ with corresponding distances is stored explicitly, while set $L_u(\offset(u))$ together with the corresponding distances is stored using Lemma~\ref{lem:set_encoding}, using $\bigo(\Delta^{R'}\log n) = \bigo(T \log n)$ and
$\bigo(\frac{n}{\symb'} \log \symb')$ bits, respectively.
Hence the total size of the scheme is 
\[\bigo(\log n+T\log n+\frac{n}{\symb'}\log \symb')=\bigo(\frac{n}{\symb}\log \symb ),\]
where we have used the fact that for any $T =  \text{poly}(n)$ the claimed label size is the same, thus we can assume $T = o(n/\text{polylog}(n))$.

\paragraph{Complexity of the Decoding.} Checking if $v\in B_u(\symb')$ and retrieving the encoded $\cost(u,v)$ takes $\bigo(T)$ time.
Similarly,  iterating through all $w\in B_u(\symb')$, checking if $w\in L_v(\offset(v))$ and
if so retrieving the encoded $\cost(v,w)$ takes, by Lemma~\ref{lem:set_encoding}, $\bigo(1)$ time per single $w$,  thus $\bigo(|B_u(\symb')|)=\bigo(T)$ total time.
All in all, we can compute $\cost(u,v)$ in $\bigo(T)$
total time.
\qed

\paragraph{Smaller values of $T$.} For the sake of completeness, we consider the special case of $T < \Delta$. Consider labeling where the label of a node
$u$ consists of $n$, $\name(u)$, and all values $\cost(u,v)$ for $v \in V$ stored using
Lemma~\ref{lem:set_encoding}. This takes $\bigo(n)$ bits, with $\bigo(1)$ decoding time, and matches claimed bounds from Theorem~\ref{thm:bounded}.

\ \\
We also observe that our result applies not only to distance labels, but also as a size upper bound of hub sets for sparse graphs. Indeed, by fixing $T = n$, and observing that $|B_u(R')| + |L_u(\offset(u))| \le \frac{n}{R'}$, we have the following:
\begin{corollary}
In bounded-degree graphs, there is a hub set construction of size $\bigo(\frac{n}{\log_{\Delta} n})$ vertices per node.
\end{corollary}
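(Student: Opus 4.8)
The plan is to revisit the hub set construction underlying Theorem~\ref{thm:bounded}, observing that once we only wish to bound the \emph{number} of hubs stored at a node (rather than the number of bits used to encode them, or the decoding time), the radius $\symb'$ is no longer tied to any prescribed time bound and may be chosen freely to balance the two parts of the hub set. Recall that the hub set is $S(u)=B_u(\symb')\cup L_u(\offset(u))$, and that the correctness argument in the proof of Theorem~\ref{thm:bounded} (the paragraph on computing $\cost(u,v)$) establishes, for \emph{every} integer $\symb'\ge 1$, that $S(u)\cap S(v)$ contains a vertex on a shortest $u-v$ path: either $v\in B_u(\symb')\subseteq S(u)$, or else the vertex $w=p_{\alpha\symb'+\offset(v)}$ of a shortest $u-v$ path with $\alpha=\lfloor(\length(u,v)-\offset(v))/\symb'\rfloor$ satisfies $\length(u,w)<\symb'$ and $\length(v,w)\equiv\offset(v)\pmod{\symb'}$, hence lies in both $B_u(\symb')\subseteq S(u)$ and $L_v(\offset(v))\subseteq S(v)$. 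Thus $(S(u):u\in V)$ is a shortest-path cover for any $\symb'\ge 1$, and it remains only to pick $\symb'$.

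I would then set $\symb':=\lfloor\frac12\log_{\Delta}n\rfloor$ (for $n$ large enough that this is at least $1$; for smaller $n$ the trivial bound $|S(u)|\le n=\bigo(n/\log_{\Delta}n)$ already suffices, since there $\log_{\Delta}n=\bigo(1)$). With this choice, the degree bound gives $|B_u(\symb')|=\bigo(\Delta^{\symb'})=\bigo(\Delta^{\frac12\log_{\Delta}n})=\bigo(\sqrt n)$, while the pigeonhole choice of $\offset(u)$ among the $\symb'$ residue classes of distances modulo $\symb'$ gives $|L_u(\offset(u))|\le n/\symb'=\bigo(n/\log_{\Delta}n)$. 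Since $\log_{\Delta}n\le\log n=o(\sqrt n)$, we have $\sqrt n=\bigo(n/\log_{\Delta}n)$, so $|S(u)|\le|B_u(\symb')|+|L_u(\offset(u))|=\bigo(n/\log_{\Delta}n)$, as claimed. Equivalently, one may take $\symb'$ to be the largest integer with $\symb'\cdot|B_u(\symb')|\le n$ uniformly over $u$, which is still $\Theta(\log_{\Delta}n)$ and yields the slightly sharper $|S(u)|=\bigo(n/\symb')$.

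I do not anticipate a genuine difficulty; the only point requiring care is that plugging $T=n$ \emph{literally} into Theorem~\ref{thm:bounded} makes $B_u(\symb')$ as large as $\Theta(\Delta^{\lfloor\log_{\Delta}n\rfloor})=\Theta(n)$, so the radius really must be decoupled from the time parameter and shrunk to roughly $\frac12\log_{\Delta}n$ --- which costs nothing here, because hub-set size, unlike label bit-size and decoding time, does not depend on how the hubs are stored. Everything else is the pigeonhole bound on the smallest residue layer together with the elementary estimate $|B_u(r)|=\bigo(\Delta^r)$ already recorded in the text.
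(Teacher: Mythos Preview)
Your proof is correct and uses exactly the paper's hub-set construction $S(u)=B_u(\symb')\cup L_u(\offset(u))$; the paper's own justification is the single line ``fixing $T=n$ and observing that $|B_u(\symb')|+|L_u(\offset(u))|\le n/\symb'$,'' with no further argument. You are right that taken literally this inequality can fail---with $\symb'=\lfloor\log_\Delta n\rfloor$ the ball alone may have $\Theta(n)$ vertices in an expander---so your choice $\symb'=\lfloor\tfrac12\log_\Delta n\rfloor$ (or, equivalently, invoking the paper's own remark that one may assume $T=o(n/\mathrm{polylog}\,n)$ in the construction) is what actually makes the bound go through; the underlying idea is identical, and your version is simply the more careful one.
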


\subsection{Graphs of Bounded Average Degree}
\label{sec:bounded_average}

We now allow for bounded average degree by reduction to the approach from Subsection~\ref{sec:bdg}. Given a graph $G$, let $\Delta=\frac{m+n}{n}$. We will create a new graph by splitting nodes of high degree. Following the formulation from~\cite[Lemma 4.2]{Stretch} (cf. Figure \ref{fig:splitting}), we can obtain a graph $G'$ on at most $2n$ nodes and at most $m+n$ edges, such that the degree of every node is bounded by $\left\lceil \frac{m}{n} \right\rceil+2 \le \Delta + 2$ and the distance between two nodes in the original graph $G$ is exactly the same as the distance between their corresponding nodes in the new graph $G'$.
\begin{figure}[t]
  \centering
    \vspace*{-2mm}
    \includegraphics[scale=1.8]{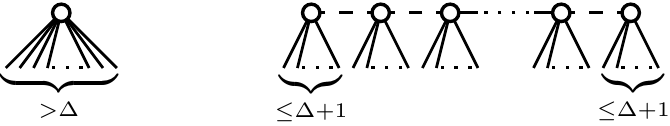}
    \vspace*{-2mm}
    \caption{Example of subdividing of a large degree node (on the left) into a family of nodes of small degree, connected by edges of weight 0 (dashed edges).}
    \label{fig:splitting}
\end{figure}
We can now directly apply the scheme from Theorem~\ref{thm:bounded} to graph $G'$, and exactly the same distance labels will work for the corresponding nodes of graph $G$. In this way, we obtain a scheme of size:
\[\bigo(\frac{n}{\symb}\log \symb + \frac{m}{n}) = \bigo(\frac{n}{\symb}\log \symb), \text{ where } \symb=\bigo(\frac{\log T}{\log\Delta}).\]
which returns $\cost(u,v)$ in $\bigo(T)$ time given the labels of $u$ and $v$. The correctness of the this reduction is guaranteed by the fact that Theorem~\ref{thm:bounded} allows for edge weights from $\{0,1\}$.

\begin{theorem}
Fix any  $T \le n$ and $\Delta$, and let $R = \frac{\log T}{\log \Delta}$. There exists an exact distance labeling for graphs with average degree $\Delta$ using labels of size $\bigo( \frac{n}{R} \log R)$
 and a corresponding decoding scheme requiring time $\bigo(T)$.\qed
\end{theorem}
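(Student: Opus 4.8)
The plan is to prove this last theorem as an essentially immediate corollary of Theorem~\ref{thm:bounded} combined with the node-splitting reduction already described in Section~\ref{sec:bounded_average}. First I would recall the reduction stated following \cite[Lemma 4.2]{Stretch}: given an arbitrary graph $G$ with $n$ nodes, $m$ edges, and average degree $\Delta = (m+n)/n$, we construct $G'$ by replacing each high-degree node with a small path (or tree) of auxiliary copies connected by weight-$0$ edges, so that $G'$ has at most $2n$ nodes, maximum degree at most $\Delta + 2$, and preserves all pairwise $\cost$-distances exactly. The key observation that makes this work smoothly is that Theorem~\ref{thm:bounded} is stated for graphs with edge weights in $\{0,1\}$, so the weight-$0$ gadget edges cause no difficulty.

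Next I would apply Theorem~\ref{thm:bounded} to $G'$ with the given time bound $T$ and with degree bound $\Delta' = \Delta + 2$, obtaining a labeling of $G'$ of size $\bigo(\tfrac{n}{R'}\log R')$ with decoding time $\bigo(T)$, where $R' = \log_{\Delta'} T = \log_{\Delta+2} T$. Each original node $v \in V(G)$ is assigned the label of its canonical copy in $G'$; since distances are preserved, the decoder for $G'$ correctly returns $\cost_G(u,v)$ from the two labels. One then checks that $R' = \Theta(R)$ for $R = \log_\Delta T$ — this is where a tiny bit of care is needed, since $\Delta$ could be as small as (just above) $1$ and then $\log\Delta$ is tiny; but the paper's convention $\log x = \max(1,\log_2 x)$ together with the fact that $R$ only enters as $\tfrac{1}{R}\log R$ (a function that varies by at most a constant factor when $R$ changes by a constant factor, for $R$ bounded away from the degenerate regime) means the size bound $\bigo(\tfrac{n}{R}\log R)$ is unaffected. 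It also must be verified that the overhead from the reduction, namely the $\bigo(m/n) = \bigo(\Delta)$ additive term coming from storing ball information at the split nodes, is absorbed: this is exactly the computation $\bigo(\tfrac{n}{R}\log R + \tfrac{m}{n}) = \bigo(\tfrac{n}{R}\log R)$ already displayed in the text, valid because $m/n = \Delta = o(n)$ in the sparse regime of interest (and if $\Delta$ is polynomially large the stated bound is trivially $\Theta(n)$ anyway).

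The only genuinely substantive point — and the one I would treat as the main obstacle, though it is quite mild — is confirming that the labels of $G'$ can be interpreted as valid \emph{labels} of $G$, i.e.\ that the scheme remains a proper distance labeling rather than merely a distance oracle. This holds because the correspondence between a node of $G$ and its canonical copy in $G'$ depends only on $G$ (it is fixed once $G$ and a spanning tree are chosen), so the label assigned to $v$ is a function of $v$ and $G$ alone, and the decoder is the decoder of Theorem~\ref{thm:bounded} applied verbatim. Assembling these pieces — the reduction, the application of Theorem~\ref{thm:bounded}, the size bookkeeping, and the observation that distance preservation plus fixed node-correspondence yields a labeling scheme — completes the proof. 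Since all of these ingredients are already in place in Section~\ref{sec:bounded_average}, the theorem follows, and I would expect the write-up to be only a few lines, essentially a restatement of the displayed size computation together with the claim about decoding time.
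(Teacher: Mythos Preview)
Your proposal is correct and follows essentially the same approach as the paper: apply the node-splitting reduction of \cite[Lemma 4.2]{Stretch} to pass to a bounded-degree graph $G'$ with $\{0,1\}$ edge weights, invoke Theorem~\ref{thm:bounded} on $G'$, and carry the resulting labels back to $G$ with the displayed size computation. The extra care you take with $R' = \Theta(R)$ and with confirming that the scheme remains a genuine labeling (not just an oracle) is sound and slightly more explicit than the paper, but the argument is the same.
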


It is easy to see that this reduction preserves bounds on the size of hub sets, so we have the following:
\begin{corollary}
In graphs with average degree $\Delta$, there is a hub set construction of size $\bigo(\frac{n}{\log_{\Delta} n})$ vertices per node.
\end{corollary}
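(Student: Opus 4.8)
The plan is to reuse the node-splitting reduction of Section~\ref{sec:bounded_average} together with the hub-set corollary already proved for bounded-degree graphs. Given $G$ on $n$ nodes with average degree $\Delta=\frac{m+n}{n}$, I would first invoke the splitting construction of~\cite[Lemma 4.2]{Stretch} to obtain a graph $G'$ on at most $2n$ nodes whose maximum degree is at most $\Delta+2$ and in which the distance between (any copies of) two original nodes equals their distance in $G$. Applying the bounded-degree corollary to $G'$ furnishes, for every node $x$ of $G'$, a hub set $S'(x)\subseteq V(G')$ of size $\bigo\!\left(\frac{2n}{\log_{\Delta+2}(2n)}\right)=\bigo\!\left(\frac{n}{\log_\Delta n}\right)$ satisfying the shortest-path-cover property in $G'$ (this property for the sets $B_x(\symb')\cup L_x(\offset(x))$ is exactly what the correctness argument of Theorem~\ref{thm:bounded} establishes). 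The equality of the two asymptotic expressions holds because $\log_{\Delta+2}(2n)=\Theta(\log_\Delta n)$; if $\Delta\ge\sqrt n$ the claimed bound $\bigo(n/\log_\Delta n)=\bigo(n)$ is vacuous and there is nothing to prove.

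Next I would push these hub sets back to $G$. Let $\pi\colon V(G')\to V(G)$ be the map sending every (possibly virtual) node of $G'$ to the original node of $G$ it represents, and fix for each $u\in V(G)$ one representative $u^\star\in\pi^{-1}(u)$. Define $S(u):=\pi(S'(u^\star))$. Since $\pi$ only identifies vertices, $|S(u)|\le|S'(u^\star)|=\bigo(n/\log_\Delta n)$, so the size bound is immediate; it remains to verify that $(S(u))_{u\in V(G)}$ is a shortest-path cover of $G$.

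For that, take any pair $u,v\in V(G)$. By the cover property in $G'$ there is a node $w'\in S'(u^\star)\cap S'(v^\star)$ lying on a shortest $u^\star$–$v^\star$ path of $G'$, i.e. $\cost_{G'}(u^\star,w')+\cost_{G'}(w',v^\star)=\cost_{G'}(u^\star,v^\star)$. Put $w=\pi(w')$; then $w\in S(u)\cap S(v)$. Because all copies of a node in $G'$ are joined by weight-$0$ edges, $\cost_{G'}(u^\star,w')=\cost_G(u,w)$, $\cost_{G'}(w',v^\star)=\cost_G(w,v)$ and $\cost_{G'}(u^\star,v^\star)=\cost_G(u,v)$, hence $\cost_G(u,w)+\cost_G(w,v)=\cost_G(u,v)$ and $w$ lies on a shortest $u$–$v$ path in $G$. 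Thus $(S(u))$ is the desired hub set construction.

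The one genuinely delicate point I anticipate is making the distance-preservation bookkeeping under $\pi$ fully precise — in particular confirming that $\cost_{G'}$ between \emph{arbitrary} copies (not only the chosen representatives) of a pair of original nodes equals the corresponding $\cost_G$ value, which is exactly the guarantee of~\cite[Lemma 4.2]{Stretch}. Everything else is routine; one should also record explicitly, as above, how to dispose of the regime where $\Delta$ is polynomially large in $n$, so that the logarithmic quantities are well behaved.
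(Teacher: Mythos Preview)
Your proposal is correct and follows exactly the same route as the paper: apply the node-splitting reduction of Section~\ref{sec:bounded_average} and invoke the bounded-degree hub-set corollary on $G'$. The paper dispatches this in a single sentence (``It is easy to see that this reduction preserves bounds on the size of hub sets''), while you spell out the projection $\pi$ and the verification of the shortest-path-cover property; this added detail is sound and does not deviate from the intended argument.
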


\section{2-Additive Distance Labeling in General Graphs}\label{sec:general}

We will apply a similar distance labeling scheme as for sparse graphs, obtaining a 2-additive approximation of the distance between any pair of  with label sizes of $o(n)$ per node. In this approximate scheme, the hub sets will have the following property.  The label of each node $ u \in V$ will provide an encoding of the node identifiers of a subset $S(u) \subseteq V$ and of the distances from $u$ to all elements of $S(u)$. The sets $S(u)$ will be defined so that for any pair $u, v$, there exists a node $w \in S(u) \cap S(v)$, such that either $w$ or a neighbor of $w$ lies on the shortest path from $u$ to $v$ in $G$. We will decode the approximate distance as before, using Eq.~\eqref{eq:distance}; clearly, $\da (u, v) \in \cost (u, v) + \{0,1,2\}$.

The construction of sets $S(u)$ is performed as follows. Let $\tau < \frac{1}{2} \log n$ be an threshold value of vertex degree, to be chosen later. Let $V' = \{ v\in V : \deg(v)>\tau\}$, and let $S'\subseteq V$ be a minimal dominating set for $V'$, i.e., a subset of $V$ with the property: $\forall_{w\in V'} B_w(1) \cap S'\neq\emptyset$. By a straightforward application of the probabilistic method~(cf.\cite[proof of Theorem 1.2.2]{probmethod}), we have that there is $S'$ such that $|S'| \leq \frac{1+\ln(\tau+1)}{\tau+1} n < \frac{2\ln \tau}{\tau} n$, and it can be easily constructed in polynomial time (a deterministic construction by a folklore greedy algorithm gives set of size $\bigo(\frac{\ln \tau}{\tau} n)$). For every $u\in V$, we define $B'_u(r)$ as the set of nodes of the ball of radius $r$ around $u$ in the subgraph $G[V\setminus V']$.  Finally, we define $\symb = \frac{\tau}{\log \tau}$ and let $L_u$, $\symb'$, and $\offset(u)$ be defined as in Section~\ref{sec:bdg}, and let $S'_u$ be a minimal subset of $S'$ such that for every $w\in V'$ adjacent the boundary of $B'_u(\symb')$, i.e. $B_w(1) \cap B'_u(\symb') \neq \emptyset$, we have $B_w(1) \cap S'_u \neq \emptyset$. Such $S'_u$ can be easily constructed in polynomial time, and moreover, since there are at most $\tau^{R'+1}$ vertices adjacent to the boundary, we have $|S'_u| = 2^{\bigo(\tau)}$.

The approximate distance label of $u$ now consists of the following elements:
\begin{enumerate}
\item $n$ and $\name(u)$,
\item $\name(v)$ and $\cost(u,v)$ for every $v\in B'_u(\symb')$,
\item $\name(v)$ and $\cost(u,v)$ for every $v\in L_u(\offset(u))$.
\item $\name(v)$ and $\cost(u,v)$ for every $v\in S'_u$,
\item $\name(v)$ and $\cost(u,v)$ for every $v\in S' \setminus S'_u$.
\end{enumerate}
The separation of $S'$ into $S'_u$ and $S' \setminus S'_u$ in the label is done to allow
efficient decoding.

\paragraph{Computing $\cost(u,v)$.}
To show the correctness of this approximate labeling scheme, fix a pair of vertices $u, v\in V$. If there exists a vertex $w\in B'_u(\symb')$ lying on a fixed shortest path $P$ between $u$ and $v$ such that $w=v$ or $w \in L_v(\offset(v))$, then the labeling scheme finds the shortest path distance between $u$ and $v$ as in Section~\ref{sec:bdg}. Otherwise, let $y$ be the nearest vertex to $u$ lying in $P \setminus B'_u(\symb')$; it follows from the construction that $y \in V'$. Then, there exists $w\in B_y(1)$ such that $w\in S'_u \subseteq S'$. In this case, the distance $\cost(u,w) + \cost (v,w)$ is a $2$-additive approximation of $\cost(u,v)$.

\paragraph{Size of the Scheme.}
The size of the label of a node $u$ in the scheme can be bounded as follows: $|B'_u(\symb')| \leq \tau^{\symb'}\leq 2^\tau$, $|L_u(\offset(u))| \leq \frac n\symb$, $S'  < \frac{2\ln \tau}{\tau} n$. Overall, the total size is $|B'_u(\symb')\cup L_u(\offset(u)) \cup S'| = \bigo(\frac{\log \tau}{\tau} n)$, thus using Lemma~\ref{lem:set_encoding} to store the sets and the corresponding distances we obtain labels of
size $\bigo(n \frac{\log^2 \tau}{\tau})$.

\paragraph{Complexity of the Decoding.}
To perform the distance decoding, for a given pair $u, v \in V$, it suffices to minimize $\cost(u,w) + \cost (v,w)$ over all $w$ belonging to $B'_u(\symb') \cup S'_u$ which are also encoded in the label of $v$. Hence, distance decoding is possible in time $2^{\bigo(\tau)}$. Overall, setting $T:=2^{\bigo(\tau)}$, we obtain the following main result of the section.

\begin{theorem}
There is a $2$-additive distance labeling scheme for general graphs, which achieves decoding time $T$ using labels of size $ \bigo(n \frac{\log^2 \log T}{\log T})$, for any $T \le n$.\qed
\end{theorem}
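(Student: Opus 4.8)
The proof of this theorem is essentially a bookkeeping exercise that assembles the three ingredients established in the preceding paragraphs of Section~\ref{sec:general}: the correctness argument for the hub sets, the label-size bound, and the decoding-time bound. The plan is therefore to present the statement as a consequence of those three paragraphs together with a suitable choice of the degree threshold $\tau$ as a function of the target decoding time $T$.

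\begin{proof}
We use the hub sets $S(u) = B'_u(\symb') \cup L_u(\offset(u)) \cup S'$ and the corresponding labels described above, with the degree threshold $\tau$ to be fixed below. Correctness (that the decoded value $\da(u,v)$ satisfies $\da(u,v) \in \cost(u,v) + \{0,1,2\}$) follows exactly as in the paragraph ``Computing $\cost(u,v)$'' above: if some $w \in B'_u(\symb')$ on a fixed shortest $u$--$v$ path $P$ has $w = v$ or $w \in L_v(\offset(v))$, then the estimate is exact as in Section~\ref{sec:bdg}; otherwise the first vertex $y$ of $P$ outside $B'_u(\symb')$ lies in $V'$, and by definition of $S'_u$ there is $w \in B_y(1) \cap S'_u \subseteq S(u) \cap S(v)$, so $w$ is adjacent to a vertex of $P$ and hence $\cost(u,w) + \cost(w,v) \le \cost(u,v) + 2$, while the reverse inequality $\da(u,v) \ge \cost(u,v)$ is immediate.

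For the size, we combine $|B'_u(\symb')| \le \tau^{\symb'} \le 2^\tau$, $|L_u(\offset(u))| \le n/\symb$ with $\symb = \tau/\log\tau$, and $|S'| < (2\ln\tau/\tau)\, n$, so that $|S(u)| = \bigo(n\log\tau / \tau)$ (the $2^\tau$ term is a lower-order contribution once $\tau = o(\log n)$, which we ensure, and is in any case $\bigo(n\log\tau/\tau)$ after the choice of $\tau$ below). Applying Lemma~\ref{lem:set_encoding} with $x = \Theta(\tau/\log\tau)$ to encode $S(u)$ together with all distances $\cost(u,v)$, $v \in S(u)$, gives a label of $\bigo((n\log\tau/\tau)\log(\tau/\log\tau)) = \bigo(n\log^2\tau / \tau)$ bits, plus the $\bigo(n/\log^t n)$ and $\bigo(\log n)$ additive overheads, which are absorbed. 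For the decoding time, as explained in the paragraph ``Complexity of the Decoding'', it suffices to iterate over all $w \in B'_u(\symb') \cup S'_u$ and, using the $\bigo(1)$-time membership-and-distance lookup of Lemma~\ref{lem:set_encoding} against the label of $v$, take the minimum of $\cost(u,w) + \cost(w,v)$; the splitting of $S'$ into $S'_u$ and $S' \setminus S'_u$ in the label lets us enumerate $S'_u$ directly without scanning all of $S'$. Since $|B'_u(\symb')| \le 2^\tau$ and $|S'_u| = 2^{\bigo(\tau)}$, the total decoding time is $2^{\bigo(\tau)}$.

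It remains to calibrate $\tau$. Given any target $T \le n$, set $\tau = c\log T$ for the constant $c > 0$ making the decoding-time bound $2^{\bigo(\tau)}$ equal to $\bigo(T)$; note $\tau < \tfrac12\log n$ is satisfied (for $T$ below some fixed polynomial of $n$, which we may assume since the claimed label bound is monotone-flat for $T = \mathrm{poly}(n)$, exactly as in Section~\ref{sec:bdg}), so the construction of $S'$ and of the $S'_u$ goes through. Substituting $\tau = \Theta(\log T)$ into the size bound $\bigo(n\log^2\tau/\tau)$ yields labels of size $\bigo(n\log^2\log T / \log T)$, with decoding time $\bigo(T)$, as claimed.
\end{proof}

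The only mildly delicate point — and the step I would treat most carefully — is verifying that the $2^\tau = 2^{\bigo(\log T)} = T^{\bigo(1)}$ term coming from $|B'_u(\symb')|$ (and likewise the $2^{\bigo(\tau)}$ bound on $|S'_u|$) does not dominate the $n\log\tau/\tau$ term in the hub-set size; this is where one uses that $T$ may be taken sub-polynomial in $n$ (the label-size expression is the same for all $T = \mathrm{poly}(n)$), so $2^\tau = o(n/\mathrm{polylog}\,n)$ and the ball contributes only a lower-order additive term, just as in the analogous step of the proof of Theorem~\ref{thm:bounded}. Everything else is a direct quotation of the three displayed paragraphs preceding the theorem.
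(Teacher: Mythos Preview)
Your proposal is correct and follows exactly the paper's approach: the theorem in the paper is stated with an immediate \qed, so its proof is precisely the three preceding paragraphs (correctness, size, decoding complexity) together with the substitution $T := 2^{\bigo(\tau)}$, which is what you reproduce. Your handling of the edge case $T = \mathrm{poly}(n)$ (reducing to a smaller $T'$ so that $\tau < \tfrac12\log n$ holds while the claimed bound is asymptotically unchanged) is the natural elaboration of the analogous step in Theorem~\ref{thm:bounded} and is the right way to make the final calibration rigorous.
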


Finally, we remark on some implications of our result. By a standard argument, converting a $2$-additive approximate distance labeling into an exact one requires an additional label of size $\frac{\log_2 3}{2} n$ bits per node (and an additional $\bigo(\frac{n}{\log n})$ overhead in the space, which is negligible), with each node $u$ encoding the difference between the approximate and real distance value, $\da(u,v)-\cost(u,v)$, for all $v \in \{(u+1) \bmod n, \ldots, (u + \lfloor \frac n 2 \rfloor) \bmod n\}$. The time overhead of the corresponding decoding is $\bigo(1)$. In an analogous manner, converting a $2$-additive approximate distance labeling into an $1$-additive approximate one requires an additional label of size $\frac{1}{2}n$ bits per node. Thus we convert our scheme into an exact distance labeling scheme or $1$-additive scheme achieving $T$ decoding time using labels of size respectively $ \frac{\log_2 3}{2} n + \bigo(n \frac{\log^2 \log T}{\log T})$ or $\frac{1}{2} n + \bigo(n \frac{\log^2 \log T}{\log T})$, for any $T \le n$.

Thus, setting $\tau$ as an arbitrarily small increasing function of $n$, for any desired decoding time $T=\omega(1)$ we can make use of labels of size $o(n)$, $\frac12n + o(n)$ and $\frac{\log_2 3}{2} n + o(n)$ respectively for $2$-additive, $1$-additive and exact distances. Moreover, using this scheme, $\bigo(1)$ decoding time can be achieved for labels of size $\varepsilon  n$, $(\frac{1}{2} + \varepsilon) \cdot n$ and $(\frac{\log_2 3}{2} + \varepsilon) \cdot n$, for any absolute constant $\varepsilon > 0$.

While a slightly stronger in terms of decoding time schemes were presented in  Alstrup \etal~\cite{DBLP:conf/soda/AlstrupGHP16} (achieving $\bigo(1)$ decoding time and labels of size $\frac{\log_2 3}{2} n + o(n)$ and $\frac12n + o(n)$ for exact and $1$-additive distances), we believe that presented here schemes are of independent value due to the simplification of the construction.

%\section{Conclusion}
%We leave as an puzzling open question breaking the barrier of $\frac{\log 3}{2} n$ of distributed labels, or similarly that of $\frac{\log 3}{2} n^2$ size for a centralized distance oracle with $\bigo(1)$ access time. The only known lower bounds are, trivially, $\frac{1}{2} n$ and $\frac{1}{2} n^2$ respectively. Similarly, for sparse graphs, the question of bridging the wide gap between $\widetilde{\bigo}(n)$ bits and the $\Omega(\sqrt{n})$ lower bound requires further investigation.

%\newpage
%\pagenumbering{roman}

\section*{Acknowledgments}
Most of the work was done while PU was affiliated to Aalto University, Finland.
Research partially supported by the National Science Centre, Poland - grant number 2015/17/B/ST6/01897.

\bibliographystyle{abbrv}
\bibliography{biblio}

\begin{thebibliography}{10}

\bibitem{Abraham:2012:HHL:2404160.2404164}
I.~Abraham, D.~Delling, A.~V. Goldberg, and R.~F. Werneck.
\newblock Hierarchical hub labelings for shortest paths.
\newblock In {\em Proceedings of the 20th Annual European Conference on
  Algorithms}, ESA'12, pages 24--35, Berlin, Heidelberg, 2012. Springer-Verlag.

\bibitem{doi:10.1137/1.9781611973075.64}
I.~Abraham, A.~Fiat, A.~V. Goldberg, and R.~F. Werneck.
\newblock {\em Highway Dimension, Shortest Paths, and Provably Efficient
  Algorithms}, chapter~64, pages 782--793.

\bibitem{Abraham11onapproximate}
I.~Abraham and C.~Gavoille.
\newblock On approximate distance labels and routing schemes with affine
  stretch.
\newblock In {\em In International Symposium on Distributed Computing (DISC)},
  pages 404--415, 2011.

\bibitem{Stretch}
R.~Agarwal and P.~B. Godfrey.
\newblock Distance oracles for stretch less than 2.
\newblock In S.~Khanna, editor, {\em Proceedings of the Twenty-Fourth Annual
  {ACM-SIAM} Symposium on Discrete Algorithms, {SODA} 2013, New Orleans,
  Louisiana, USA, January 6-8, 2013}, pages 526--538. {SIAM}, 2013.

\bibitem{probmethod}
N.~Alon and J.~H. Spencer.
\newblock {\em The Probabilistic Method}.
\newblock J. Wiley \& Sons, New York, 2nd edition, 2000.

\bibitem{Sublinear}
S.~{Alstrup}, S.~{Dahlgaard}, M.~{B{\ae}k Tejs Knudsen}, and E.~{Porat}.
\newblock {Sublinear distance labeling for sparse graphs}.
\newblock {\em CoRR}, abs/1507.02618, 2015.

\bibitem{DBLP:conf/soda/AlstrupGHP16}
S.~Alstrup, C.~Gavoille, E.~B. Halvorsen, and H.~Petersen.
\newblock Simpler, faster and shorter labels for distances in graphs.
\newblock In R.~Krauthgamer, editor, {\em Proceedings of the Twenty-Seventh
  Annual {ACM-SIAM} Symposium on Discrete Algorithms, {SODA} 2016, Arlington,
  VA, USA, January 10-12, 2016}, pages 338--350. {SIAM}, 2016.

\bibitem{DBLP:conf/icalp/BabenkoGGN13}
M.~A. Babenko, A.~V. Goldberg, A.~Gupta, and V.~Nagarajan.
\newblock Algorithms for hub label optimization.
\newblock In F.~V. Fomin, R.~Freivalds, M.~Z. Kwiatkowska, and D.~Peleg,
  editors, {\em Automata, Languages, and Programming - 40th International
  Colloquium, {ICALP} 2013, Riga, Latvia, July 8-12, 2013, Proceedings, Part
  {I}}, volume 7965 of {\em Lecture Notes in Computer Science}, pages 69--80.
  Springer, 2013.

\bibitem{Bauer:2010:SFR:1498698.1537599}
R.~Bauer and D.~Delling.
\newblock {SHARC}: Fast and robust unidirectional routing.
\newblock {\em J. Exp. Algorithmics}, 14:4:2.4--4:2.29, Jan. 2010.

\bibitem{BCE05}
B.~Bollob{\'a}s, D.~Coppersmith, and M.~Elkin.
\newblock Sparse distance preservers and additive spanners.
\newblock {\em SIAM Journal on Discrete Mathematics}, 19(4):1029--1055, 2005.

\bibitem{Chechik:2014:ADO:2591796.2591801}
S.~Chechik.
\newblock Approximate distance oracles with constant query time.
\newblock In {\em Proceedings of the 46th Annual ACM Symposium on Theory of
  Computing}, STOC '14, pages 654--663, New York, NY, USA, 2014. ACM.

\bibitem{ChepoiDEHVX12}
V.~Chepoi, F.~F. Dragan, B.~Estellon, M.~Habib, Y.~Vax{\`{e}}s, and Y.~Xiang.
\newblock Additive spanners and distance and routing labeling schemes for
  hyperbolic graphs.
\newblock {\em Algorithmica}, 62(3-4):713--732, 2012.

\bibitem{Cohen:2003:RDQ:942270.944300}
E.~Cohen, E.~Halperin, H.~Kaplan, and U.~Zwick.
\newblock Reachability and distance queries via 2-hop labels.
\newblock {\em SIAM J. Comput.}, 32(5):1338--1355, May 2003.

\bibitem{DBLP:journals/corr/abs-1006-1117}
H.~Cohen and E.~Porat.
\newblock On the hardness of distance oracle for sparse graph.
\newblock {\em CoRR}, abs/1006.1117, 2010.

\bibitem{EliasGamma}
P.~Elias.
\newblock Universal codeword sets and representations of the integers.
\newblock {\em IEEE Transactions on Information Theory}, 21(2):194--203, Mar
  1975.

\bibitem{Gavoille:2004:DLG:1036161.1036165}
C.~Gavoille, D.~Peleg, S.~P{\'e}rennes, and R.~Raz.
\newblock Distance labeling in graphs.
\newblock {\em J. Algorithms}, 53(1):85--112, Oct. 2004.

\bibitem{pollak}
R.~Graham and H.~Pollak.
\newblock On embedding graphs in squashed cubes.
\newblock In Y.~Alavi, D.~Lick, and A.~White, editors, {\em Graph Theory and
  Applications}, volume 303 of {\em Lecture Notes in Mathematics}, pages
  99--110. Springer Berlin Heidelberg, 1972.

\bibitem{Kohler06fastpoint-to-point}
E.~Köhler, R.~H. Möhring, and H.~Schilling.
\newblock Fast point-to-point shortest path computations with arc-flags.
\newblock In {\em 9th DIMACS Implementation Challenge}, 2006.

\bibitem{NV08}
I.~Nitto and R.~Venturini.
\newblock On compact representations of all-pairs-shortest-path-distance
  matrices.
\newblock In P.~Ferragina and G.~M. Landau, editors, {\em Combinatorial Pattern
  Matching, 19th Annual Symposium, {CPM} 2008, Pisa, Italy, June 18-20, 2008,
  Proceedings}, volume 5029 of {\em Lecture Notes in Computer Science}, pages
  166--177. Springer, 2008.

\bibitem{Succincter}
M.~P{\u{a}}tra{\c{s}}cu.
\newblock Succincter.
\newblock In {\em 49th Annual {IEEE} Symposium on Foundations of Computer
  Science, {FOCS} 2008, October 25-28, 2008, Philadelphia, PA, {USA}}, pages
  305--313. {IEEE} Computer Society, 2008.

\bibitem{PatrascuR14}
M.~P{\u{a}}tra{\c{s}}cu and L.~Roditty.
\newblock Distance oracles beyond the {Thorup}-{Zwick} bound.
\newblock {\em {SIAM} J. Comput.}, 43(1):300--311, 2014.

\bibitem{DBLP:reference/algo/Roditty15a}
L.~Roditty.
\newblock Distance oracles for sparse graphs.
\newblock In M.~Kao, editor, {\em Encyclopedia of Algorithms}. Springer, 2015.

\bibitem{sommer2009distance}
C.~Sommer, E.~Verbin, and W.~Yu.
\newblock Distance oracles for sparse graphs.
\newblock In {\em 50th IEEE Symposium on Foundations of Computer Science
  (FOCS)}, pages 703--712. IEEE, 2009.

\bibitem{Thorup:2001:CRS:378580.378581}
M.~Thorup and U.~Zwick.
\newblock Compact routing schemes.
\newblock In {\em Proceedings of the Thirteenth Annual ACM Symposium on
  Parallel Algorithms and Architectures}, SPAA '01, pages 1--10, New York, NY,
  USA, 2001. ACM.

\bibitem{WP11}
O.~Weimann and D.~Peleg.
\newblock A note on exact distance labeling.
\newblock {\em Inf. Process. Lett.}, 111(14):671--673, 2011.

\bibitem{doi:10.1137/1.9781611973105.39}
C.~Wulff-Nilsen.
\newblock Approximate distance oracles with improved query time.
\newblock In {\em Proceedings of the Twenty-Fourth Annual ACM-SIAM Symposium on
  Discrete Algorithms}, pages 539--549.

\end{thebibliography}

\end{document}